\documentclass[journal]{IEEEtran}

% correct bad hyphenation here
\hyphenation{op-tical net-works semi-conduc-tor}

%En-têtes persos
\DeclareTextSymbol{\degre}{T1}{6}
\DeclareTextSymbol{\degre}{OT1}{23}
\usepackage{url}
\usepackage{amsmath,amsfonts,amssymb,amsthm,latexsym}
\usepackage[numbers,sort&compress]{natbib}
\usepackage{mathrsfs}
\usepackage{stmaryrd}
\usepackage{amscd}
\usepackage{mathtools}
\usepackage{bm}
\usepackage[T1]{fontenc}
\usepackage{graphicx}
\usepackage{epstopdf}
\usepackage[usenames, dvipsnames]{xcolor} %plus de couleurs...
\usepackage{tikz}
\usetikzlibrary{shapes}
\usepackage{dsfont}
\usepackage{graphicx}
\usepackage{bm}
\usepackage{tikz}
\usetikzlibrary{calc}
\usetikzlibrary{plotmarks}
\usetikzlibrary{shapes,arrows}
%Pseudo-code
\usepackage[ruled]{algorithm2e}
%\usepackage{algorithm}
%\usepackage{algpseudocode}
% added by SL
\usepackage{yhmath}
\usepackage{graphics}
\usepackage{textcomp}
\usepackage{amssymb}
\usepackage{amscd}
\usepackage{epsfig}
\usepackage{psfrag}
\usepackage{rotating}
\usepackage{color}

\newcommand{\mc}{\mathcal}

\newtheorem{theorem}{Theorem}[section]
\newtheorem{definition}[theorem]{Definition}

\newtheorem{proposition}[theorem]{Proposition}
\newtheorem{corollary}[theorem]{Corollary}

\newtheorem{assumption}{Assumption}

%%%%%%%%%%%%%%%%%%%%%%%%%%%%
\begin{document}
\title{Composite charging games in networks of electric vehicles}
\author{Olivier Beaude, Cheng Wan, and Samson Lasaulce
\date{\today}
\thanks{O. Beaude is PhD student at Renault SAS, L2S, and Supelec, C. Wan is with the Department of Economics and Nuffield College, University of Oxford, and S. Lasaulce is with L2S (CNRS - Supelec - Univ. Paris Sud XI), Paris, France (e-mail: olivier.beaude@lss.supelec.fr).}}

% The paper headers
\markboth{NETGCOOP 2014}%
{Shell \MakeLowercase{\textit{et al.}}: Bare Demo of IEEEtran.cls for Journals}

% make the title area
\maketitle

\begin{abstract}
An important scenario for smart grids which encompass distributed
electrical networks is given by the simultaneous presence
of aggregators and individual consumers. In this work, an aggregator is seen as an entity (a coalition) which is able to manage jointly the energy demand
of a large group of consumers or users. More precisely, the demand 
consists in charging an electrical vehicle (EV) battery. The way the EVs user charge their batteries matters since it strongly impacts the network, especially the distribution network costs (e.g., in terms of Joule losses or transformer ageing). Since the charging policy is chosen by the users or the aggregators, the charging problem is naturally distributed. It turns out that one of the tools suited to
tackle this heterogenous scenario has been introduced only recently namely, through the notion of composite games. This paper exploits for the first time in the literature of smart grids the notion of composite game and
equilibrium. By assuming a rectangular charging profile for an
 EV, a composite equilibrium analysis is conducted, followed by a detailed analysis of a case study which assumes three possible charging periods or time-slots. Both the provided analytical and numerical results allow one to better understand the relationship between the size (which is a measure) of the coalition and the network sum-cost. In particular, a social dilemma, a situation where everybody prefers unilaterally defecting to cooperating, while the consequence is the worst for all, is exhibited.
\end{abstract}

% Note that keywords are not normally used for peerreview papers.
\begin{IEEEkeywords} EV charging - Electrical Distribution Networks - Composite game - Composite Equilibrium.
\end{IEEEkeywords}

\IEEEpeerreviewmaketitle

%%%%%%%%%%%%%%%%%%%%%%%%%
\section{Introduction}
\label{sec:introduction}

EV charging can lead to significant impacts on the existing and future energy networks \cite{Clement2009,Gong2011}. Considering different physical metrics, the smart grid literature pursued the goal to mitigate these impacts optimizing EV charging schedules using Demand Side Management \cite{Galus2008}, proposing charging algorithms \cite{Deilami2011,Shinwari12} or pricing policies \cite{Wu2012,Chao2010}.

The EV charging problem can be seen as a distributed problem. Indeed, it is reasonable to assume that EV's owners can decide when they plug their vehicles and how they charge their batteries. As a consequence, game theoretical tools have been proposed to tackle the charging problem (see e.g., \cite{Saad2012} for a recent survey). Very relevant contributions include \cite{Agarwal2011,Mohsenian-Rad2010,Wu2012,Ibars2010}. On the contrary, it may also be assumed that the charging profiles of the EVs are decided by a coordinator, often called aggregator \cite{Han2010,Wu2012}, who is much more informed about the real constraints of the electrical network. In this case, the decision is centralized and optimization tools are used to find the optimal policy. The present work deals with the situation in between, when a fraction of the EVs is supposed to independently decide their charging policies, while the rest is governed by an aggregator. This framework has been recently introduced in the game theoretic literature \cite{Wan11} with the notion of \emph{composite equilibrium} and, to the best of our knowledge, our work is the first to propose an application of this concept to the EV charging problem, or more generally to a smart grid issue.

The contributions of this paper include

\begin{itemize}
\item the formulation of the EV charging problem as a composite game,
\item the characterization and proof of existence of its equilibrium,
\item the description of its main properties in the particular case of three time-slots,
\item the numerical analysis of these properties in the case of distribution network costs.
\end{itemize}

The paper is structured as follows. Sec. \ref{sec:composEq} provides the model of composite games in the context of EV charging. Sec. \ref{sec:defComposEq} defines and characterizes the composite equilibrium. Sec. \ref{sec:partCase} treats a particular case, and conducts an equilibrium analysis. A thorough numerical analysis is then provided in Sec. \ref{sec:simu} and the paper is concluded by Sec. \ref{sec:ccl}.

\medskip

\emph{Notations.} Bold symbols stand for vectors.
For all $d \in \mathbb{N}^*$, $\Delta^d=\{\bm{z}=(z_i)_{i=1}^d\in \mathbb{R}^{d}|\bm{z}\geq \bm{0}, \sum^{d}_{i=1} z_i=1 \}$; for all $D>0$, $\Delta^d_{D}=\{\bm{z}=(z_i)_{i=1}^d\in \mathbb{R}^{d}|\bm{z}\geq \bm{0}, \sum^{d}_{i=1} z_i=D \}$.
\medskip

For all $m, n\in \mathbb{N}$ such that $m \leq n$, $\llbracket m,  n \rrbracket = \{m, m+1, \ldots, n-1, n\}$.

\medskip

For $\bm{x}=(x_1,x_2,...,x_d),\bm{x'}=(x'_1,x'_2,...,x'_d) \in \mathbb{R}^{d}$, $\left\langle \bm{x}, \bm{x'}\right\rangle=\sum^{d}_{i=1}x_i x'_i$ denotes the inner product of $\bm{x}$ and $\bm{x'}$.

%%%%%%%%%%%%%%%%%%%%%%%%%

%%%%%%%%%%%%%%%%%%%%%%%%%
\section{Composite EV charging game}
\label{sec:composEq}
%%%%%%%%%%%%%%%%%%%%%%%%%

%%%%%%%%%%%%%%%%%%%%%%%%%%%%%%%
\subsection{EV charging game}\label{EVChargGame}
%%%%%%%%%%%%%%%%%%%%%%%%%%%%%%%

There are $T$ time-slots, labeled by $t \in \mc{T}=\lbrace 1, 2, \ldots, T \rbrace$. We consider a set of EVs which have to choose $C$ consecutive time-slots, $C \leq T$, to charge. Each EV is considered as a player, i.e. aims to minimize his charging cost taking into account the impact of the other EVs' charging decisions using the framework of Game Theory \cite{Beaude2012}. The assumption of choosing consecutive time-slots is mathematically restrictive\footnote{Indeed, the charging vector is only defined by the time to start charging while a more general case would be to consider charging vectors in the $T-1$-dimensional simplex.} and thus provides a less general mathematical structure than with freely varying charging vectors \cite{Orda1993}.
It is nonetheless very important to mitigate impact of the charging policy on the battery lifetime because highly time varying charging currents can increase battery temperature much more and lead to a shorter lifetime \cite{Gan2012}.
%\cite{Rivera2008},\cite{Rivera2008b} for the transformer thermal modeling, \cite{IEEE1995}-\cite{IEC1993} for the standard aging formula and \cite{Gan2012} which explicits the difficulty in the integer case).

\medskip

%%%%%%%%%%%%%%%%%%%%%%%%%%%%%%%
\subsection{Composite game}\label{CompGame}
%%%%%%%%%%%%%%%%%%%%%%%%%%%%%%%

In this paper, we consider EVs as nonatomic players who have weight zero and who are called \emph{individuals}. Nonatomic means that an EV alone cannot have an impact on the cost perceived by other players. This holds in particular when there is a large number of EVs. A group of individuals of positive total weight form a \emph{coalition} if the behaviour of its members is coordinated by an \textit{aggregator} \cite{Wu2012}. A charging game with both nonatomic players and coalitions is called a \emph{composite} game \cite{Wan11} and its equilibria are called \emph{composite} ones accordingly. This corresponds to the practical situation where some of EVs decide their charging policies independently, while the others are coordinated by one or several aggregators. Without loss of generality, the total weight of all the EVs is assumed to be $1$.

%%%%%%%%%%%%%%%%%%%%%%%%%%%%%%%
\subsection{Charging flows definition}\label{ChargingFlows}
%%%%%%%%%%%%%%%%%%%%%%%%%%%%%%%

We suppose that there are $K$ coalitions. For any coalition $k$ of size $M^k>0$ ($1\leq k \leq K$), let $x^k_t$ denote the weight of the EVs sent by it to start charging at time $t\in \llbracket 1, T-C+1 \rrbracket$. Thus the vector $\bm{x}^k=(x^k_t)^{T-C+1}_{t=1}\in F^k=\Delta^{T-C+1}_{M^k}$ characterizes the strategy, or \emph{charging flow}, of coalition $k$.

For $t\in \llbracket 1, T \rrbracket$, let $y^{k}_{t}$ denote the quantity of EVs from coalition $k$ charging at time $t$, and define $\bm{y}^k=(y^{k}_{t})^T_{t=1}$. It is called the \emph{charging load} induced by its strategy $\bm{x}^k$. Indeed,
\begin{equation}
\label{eq:defLoad}
y^{k}_{t} = \displaystyle \sum_{s=\max(t-C+1,1)}^{t}x^{k}_{s}, \quad \forall\, 1 \leq t \leq T \textrm{.}
\end{equation}

Let $\hat{F}^k$ be the set of charging loads of coalition $k$ induced by its strategies. It is not difficult to verify that $\hat{F}^k$ is a convex and compact subset of $\mathbb{R}^{T}$.

\medskip

For the individuals of total weight $M^0\geq 0$ ($M^0=1-\sum_{k=1}^K M^k$), let $x^0_t$ be the weight of the individuals starting charging at time $t\in \llbracket 1, T-C+1 \rrbracket$. Define $\bm{x}^0=(x^0_t)^{T-C+1}_{t=1} \in F^0=\Delta^{T-C+1}_{M^0}$, which can be viewed as the strategy profile of the individuals.

Let $\bm{y}^{0}$ be the \emph{charging load} induced by $\bm{x}^0$ and $\hat{F}^{0}$ the set of charging loads induced by the strategies of the individuals.

\medskip

Finally, let $\bm{x}=(\bm{x}^i)^{K}_{i=0}\in F=F^0\times F^1 \times \cdots \times F^K$ denote the strategy profile of all the players and $\bm{y}=(\bm{y}^i)^{K}_{i=0}$ be the \emph{system charging load}. Denote $\hat{F}=\hat{F}^0 \times \hat{F}^1 \times \cdots \times \hat{F}^K$ the set of feasible system charging loads.

\medskip

The total weight of EVs charging at time $t$ is denoted by

\begin{equation}
\label{eq:defChargingWeight}
z_t:=\sum_{k=0}^{K}y^{k}_{t} \text{ .}
\end{equation}

Define the \emph{aggregate charging load} as $\boldsymbol{z}=(z_t)^{T}_{t=1}$.

%%%%%%%%%%%%%%%%%%%%%%%%%%%%%%%
\subsection{Cost definition}\label{CostDef}
%%%%%%%%%%%%%%%%%%%%%%%%%%%%%%%

Let us now introduce the (per unit) charging cost function at each time-slot, $f$. If there are EVs of total weight $z_t$ charging at time $t$, then the charging cost for each of them is $f(L_{t}+P z_t)$ during that time-slot, where 

\begin{itemize}
\item $\boldsymbol{L}=(L_t)^{T}_{t=1}$ is the non-EV load, assumed to be known;
\item $P$ the total EV charging power in the district, i.e. the number of EVs in the district multiplied by the charging rate of an EV assumed constant between EVs here for simplicity\footnote{Typically $3$kW in the residential case.};
\item and $f$ is a real-valued function defined on $[0,W]$ for $W$ sufficiently large, i.e. bigger than the potential maximal load in the district.
\end{itemize}

This expresses the fact that EV charging cost depends on the impact measured in the elctrical network, which is one of the main ideas of smart grids. Notice that $f$ is common for all the EVs.

\begin{assumption}\label{assp_1}
The charging cost function $f$ is of class $\mathcal{C}^1$, convex, strictly increasing and nonnegative on $[0,W]$.
\end{assumption}
Assumption \ref{assp_1} always holds in this paper.

For an individual EV, the charging strategy consists in charging from time-slot $t$ to time-slot $t+C-1$ ($t\in \llbracket 1, T-C+1 \rrbracket$), and its cost function is
\begin{equation}
%\forall 1 \leq t \leq T-C+1, \, \forall \bm{x} \in \mathcal{S}=\Sigma^{K+1}, \,
u_{t}(\bm{x})=\displaystyle \sum_{s=t}^{t+C-1} f(L_s+Pz_s), \forall \, \bm{x} \in F,\; \forall \, 1 \leq t \leq T-C+1\textrm{ ,}
\end{equation}
where the dependency between $\bm{x}$ and $\bm{z}$ is implicit in the notations and comes from (\ref{eq:defLoad}) and (\ref{eq:defChargingWeight}). The average cost to coalition $k$ can be written as a function of $\bm{x}$
\begin{equation}
\Pi^{k}(\bm{x})=\frac{1}{M^k}\displaystyle \sum_{t=1}^{T-C+1} x^{k}_{t}\,u_{t}(\bm{x}) \textrm{ ,}
\end{equation}

or as a function of $\bm{y}$
\begin{equation}
\hat{\Pi}^{k}(\bm{y}) = \frac{1}{M^k}\displaystyle \sum_{t=1}^{T} y^{k}_{t}\,f(L_t+Pz_t) \textrm{ .}
\end{equation}

The average cost to the individuals can be similarly defined as a function of flow $\bm{x}$, denoted by $\Pi^0$, or as a function of load $\bm{y}$, denoted by $\hat{\Pi}^0$.

Finally, the social cost is function of $\boldsymbol{y}$

\begin{equation}
\Pi(\boldsymbol{y}) = \displaystyle \sum_{t=1}^{T} z_{t}\,f(L_t+Pz_t)  \textrm{ ,}
\end{equation}
where, again, the dependancy between $\bm{z}$ and $\bm{y}$ is implicit in the notations.

Let this (composite EV) charging game be denoted by $\mathcal{G}(T,C,f,(M^i)_{i=0}^K)$.

\medskip

%%%%%%%%%%%%%%%%%%%%%%%%%%%%%%%
\subsection{Application to distribution network costs}\label{AppliDN}
%%%%%%%%%%%%%%%%%%%%%%%%%%%%%%%

In the context of residential distribution networks (the system which delivers power from the generation points to the end users, see \cite{Clement2009} for an illustration), two particular classes of physical cost functions $f$ will be considered:

\begin{itemize}
\item Joule losses, which is a quadratic one: $f(L_{t}+P z_t)=(L_{t}+P z_t)^{2}$ \cite{Deilami2011};
\item equipment ageing (transformers for example), which can be approximated by an exponential function (when the transformer power is close to its nominal level) \cite{Gong2011}: $f(L_{t}+P z_t)=\exp[\beta(L_{t}+P z_t)]$, with $\beta>0$.
\end{itemize}

Note that the standard "mathematical" linear case will also be studied, which corresponds to a standard approximation of practical applications. Observe that Assumption \ref{assp_1} holds in these three cases and also when considering a weighted sum of these objectives.

%%%%%%%%%%%%%%%%%%%%%%%%%
\section{Definition and characterization of composite equilibrium}
\label{sec:defComposEq}
%%%%%%%%%%%%%%%%%%%%%%%%%

We are now interested in defining and characterizing a configuration of equilibrium at which neither the individuals nor the coalitions have incentive to deviate from their current strategy.

%%%%%%%%%%%%%%%%%%%%%%%%%%%%%%%%
\subsection{Composite equilibrium conditions} \label{CompositeEqCond}
%%%%%%%%%%%%%%%%%%%%%%%%%%%%%%%%

At equilibrium, if strategy $t$ is used by individuals, i.e. $x^{0}_{t}>0$, then
\begin{equation}\label{eqcnd_str}
 u_{t}(\bm{x}) \leq u_{s}(\bm{x}), \qquad \forall s \in \left\{1,2,...,T-C+1\right\}\, \textrm{ ,}
\end{equation}
according to the standard Wardrop equilibrium conditions \cite{War52}. Notice that in this case the cost of all the individuals is common and equal to the average cost $\Pi^0$.

At equilibrium, a coalition $k$ minimizes the average cost of its members\footnote{This is obviously equivalent to minimizing the total cost of its members.}, given the strategies of the other coalitions and individuals:
\begin{equation}\label{eqcnd_coal}
\Pi^{k}(\bm{x}^{k},\bm{x}^{-k})=\min_{\bm{x'}^{k}\in F^{k}}\Pi^{k}(\bm{x'}^{k},\bm{x}^{-k})
\end{equation}
where $\bm{x}^{-k}=(\boldsymbol{x}^i)_{0 \leq i \leq K, \, i \neq k}$ denotes the strategies of other players than coalition $k$.

\begin{definition}
In a composite charging game $\mathcal{G}(T,C,f,(M^i)_{i=0}^K)$, a configuration $\bm{x}\in F$ is a composite equilibrium (CE) if conditions \eqref{eqcnd_str} and \eqref{eqcnd_coal} are satisfied.
\end{definition}
\medskip

Composite equilibria can also be characterized via variational inequalities. Denote the gradient of coalition $k$'s cost w.r.t. its strategy (or flow) $\boldsymbol{x}^k$ by
\begin{equation}
\bm{U}^{k}(\bm{x}^k, \bm{x}^{-k}) := \nabla_{\bm{x}^{k}}\Pi^{k}(\bm{x}^k, \bm{x}^{-k}) \textrm{.}
\end{equation}
Define also $\bm{U}^{0}(\bm{x})=\Pi^0(\bm{x})$ and let $\bm{U}=(\bm{U}^i)^K_{i=0}$.

At the minimum, the \textbf{first order (necessary) condition} of the minimization problem (\ref{eqcnd_coal}) is
\begin{equation}\label{vip_coal}
\forall \tilde{\bm{x}}^k\in F^k, \, \quad \left\langle \bm{U}^{k}(\bm{x}^k, \bm{x}^{-k}),\tilde{\bm{x}}^k-\bm{x}^k \right\rangle \geq 0 \textrm{.}
\end{equation}

\begin{proposition}\label{prop:CEineqvar}
If for all coalition $k$, $\Pi^k(\bm{x}^k,
\bm{x}^{-k})$ is convex with respect to $\bm{x}^k$ on $F^k$ for all $\bm{x}^{-k}\in F^{-k}$, then $\bm{x}^{*}\in F$ is a composite equilibrium if and only if
\begin{align*}
&\left\langle \bm{U}^{0}(\bm{x}^{* 0}, \bm{x}^{* -0}),\bm{x}^0-\bm{x}^{* 0} \right\rangle \geq 0, \, \,  \forall \bm{x}^0\in F^0 \textrm{,}\\
&\left\langle \bm{U}^{k}(\bm{x}^{* k}, \bm{x}^{* -k}),\bm{x}^k-\bm{x}^{* k} \right\rangle \geq 0, \, \, \forall \bm{x}^k\in F^k, \forall\, k=1,\ldots, K \textrm{,}
\end{align*}
which is equivalent to
\begin{equation}\label{eq:varineq}
\left\langle \bm{U}(\bm{x}^{*}),\bm{x}-\bm{x}^{*} \right\rangle \geq 0, \, \quad  \forall\, \bm{x}\in F \textrm{.}
\end{equation}
\end{proposition}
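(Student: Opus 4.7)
My plan is to handle the coalitions and the individuals separately, then to glue the resulting variational inequalities into the global one \eqref{eq:varineq} via the product structure $F=F^0\times F^1\times\cdots\times F^K$. For each coalition $k\geq 1$, equilibrium condition \eqref{eqcnd_coal} says exactly that $\bm{x}^{*k}$ minimises the differentiable function $\bm{x}^k\mapsto \Pi^k(\bm{x}^k,\bm{x}^{*-k})$ over the convex compact set $F^k$ (differentiability being inherited from Assumption~\ref{assp_1} through \eqref{eq:defLoad} and \eqref{eq:defChargingWeight}). The standard first-order necessary condition for such a constrained minimum on a convex set is precisely the VI $\langle \bm{U}^k(\bm{x}^{*k},\bm{x}^{*-k}),\bm{x}^k-\bm{x}^{*k}\rangle\geq 0$ for all $\bm{x}^k\in F^k$, and under the convexity hypothesis in the statement this condition is also sufficient; hence the equivalence for every coalition block.

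For the individuals I would exploit nonatomicity: changing $\bm{x}^0$ does not modify $u_t(\bm{x})$, so $\Pi^0$ is linear in $\bm{x}^0$ with gradient (up to the positive factor $1/M^0$) equal to the vector $(u_t(\bm{x}))_{t=1}^{T-C+1}$; reading $\bm{U}^0(\bm{x})$ as this gradient, the VI becomes $\sum_t u_t(\bm{x}^*)(x^0_t-x^{*0}_t)\geq 0$ for all $\bm{x}^0\in F^0=\Delta^{T-C+1}_{M^0}$. I would then show equivalence with \eqref{eqcnd_str} by a standard exchange argument. Setting $c^*=\min_s u_s(\bm{x}^*)$, if \eqref{eqcnd_str} holds then $u_t(\bm{x}^*)=c^*$ on the support of $\bm{x}^{*0}$ and $u_t(\bm{x}^*)\geq c^*$ elsewhere, so $\sum_t u_t(\bm{x}^*)x^0_t\geq c^*M^0=\sum_t u_t(\bm{x}^*)x^{*0}_t$ for every $\bm{x}^0\in F^0$; conversely, if \eqref{eqcnd_str} fails there exist $s,t$ with $u_s(\bm{x}^*)<u_t(\bm{x}^*)$ and $x^{*0}_t>0$, and transferring a small mass $\varepsilon>0$ from $t$ to $s$ yields an admissible $\bm{x}^0\in F^0$ making the inner product equal to $\varepsilon(u_s(\bm{x}^*)-u_t(\bm{x}^*))<0$, contradicting the VI.

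The two preceding blocks give the first half of the statement. For the equivalence with \eqref{eq:varineq}, the product structure of $F$ does all the work: testing \eqref{eq:varineq} against $\bm{x}\in F$ that agrees with $\bm{x}^*$ on every block except one recovers each block-wise VI, while summing the $K+1$ block inequalities reassembles \eqref{eq:varineq}. The only genuinely content-bearing step is the Wardrop-to-VI translation for the nonatomic block; the coalition side is a textbook first-order argument and the gluing step is pure bookkeeping, so I expect the Wardrop equivalence (and the clarification of what $\bm{U}^0$ should mean when $\Pi^0$ is linear in $\bm{x}^0$) to be the main conceptual hurdle rather than a technical one.
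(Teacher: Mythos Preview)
Your overall architecture is the standard one and matches what the paper has in mind; note that the paper itself does not give a self-contained argument but simply refers to Prop.~1 in \cite{Wan11}, so your write-up is more explicit than the paper's. The coalition blocks and the product-structure gluing are fine.

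There is, however, a conceptual slip in the nonatomic block. You write that ``changing $\bm{x}^0$ does not modify $u_t(\bm{x})$'' and conclude that $\Pi^0$ is linear in $\bm{x}^0$ with gradient (proportional to) $(u_t(\bm{x}))_t$. This is not correct at the level of the \emph{aggregate} individual flow: $\bm{x}^0$ enters $z_t$ through \eqref{eq:defLoad}--\eqref{eq:defChargingWeight}, hence $u_t(\bm{x})$ genuinely depends on $\bm{x}^0$, and $\nabla_{\bm{x}^0}\Pi^0$ carries an extra congestion term $\sum_s x^0_s\,\partial u_s/\partial x^0_t$. Nonatomicity means that a \emph{single} individual's deviation leaves $u_t$ unchanged, which is exactly why the Wardrop condition \eqref{eqcnd_str} is stated pointwise in the costs $u_t$, not via the gradient of $\Pi^0$. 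The right reading of $\bm{U}^0$ (the paper's line ``$\bm{U}^0(\bm{x})=\Pi^0(\bm{x})$'' is dimensionally inconsistent and should be taken as a typo) is the per-unit cost vector $(u_t(\bm{x}))_{t=1}^{T-C+1}$, which is the standard object in the VI characterisation of Wardrop equilibria.

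Fortunately, your exchange argument does not actually use the linearity claim: it works directly with $\bm{U}^0=(u_t(\bm{x}^*))_t$ and correctly establishes the equivalence between \eqref{eqcnd_str} and $\langle \bm{U}^0(\bm{x}^*),\bm{x}^0-\bm{x}^{*0}\rangle\geq 0$ on $F^0$. So the fix is purely in the justification: drop the ``$\Pi^0$ is linear'' sentence, take $\bm{U}^0=(u_t)_t$ by definition, and your proof goes through unchanged.
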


\begin{proof}
The proof is similar to the one of Prop. 1 in \cite{Wan11}.
\end{proof}

\bigskip

It can be shown that the necessary and sufficient condition (\ref{eq:varineq}) remains the same if the cost function $f$ undergoes an affine transformation $f \longmapsto af+b$ with $a>0, \, b \in \mathbb{R}$. In turn, $\bm{x}^{*}$ is a composite equilibrium of $\mathcal{G}(T,C,f,(M^k)_{k=0}^K)$ if and only if it is a composite equilibrium of $\mathcal{G}(T,C,f+b,(M^i)_{i=0}^K)$ (or $\mathcal{G}(T,C,af,(M^i)_{i=0}^K)$ for $a>0$). This is why the simulations at the end of this paper are restricted to the case $f(L_t+Pz_t)=(L_t+Pz_t)^2$ in the quadratic case and $f(L_t+Pz_t)=L_t+Pz_t$ in the linear case.

With formulation (\ref{eq:varineq}), the existence of an equilibrium in the charging game $\mathcal{G}(T,C,f,(M^k)_{k=0}^K)$ is easily obtained.

\begin{theorem}
\label{thm:existCE}
If for all coalition $k$, $\Pi^k(\bm{x}^k,
\bm{x}^{-k})$ is convex with respect to $\bm{x}^k$ on $F^k$ for all $\bm{x}^{-k}\in F^{-k}$, then the charging game $\mathcal{G}(T,C,f,(M^k)_{k=0}^K)$ has a composite equilibrium.
\end{theorem}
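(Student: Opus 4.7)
The plan is to reduce the existence of a composite equilibrium to the existence of a solution of the variational inequality \eqref{eq:varineq}, and then to invoke the classical Hartman--Stampacchia existence theorem. Under the convexity hypothesis of the statement, Proposition \ref{prop:CEineqvar} already tells us that $\bm{x}^*\in F$ is a composite equilibrium if and only if it satisfies \eqref{eq:varineq}, so it suffices to produce one such $\bm{x}^*$.

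First, I would verify that the feasible set $F$ has the structural properties required by Hartman--Stampacchia, namely non-emptiness, convexity and compactness. Each component $F^k=\Delta^{T-C+1}_{M^k}$ is a scaled standard simplex in $\mathbb{R}^{T-C+1}$, hence possesses these three properties, and the finite Cartesian product $F = F^0\times F^1\times \cdots \times F^K$ inherits them. Second, I would check continuity of the map $\bm{U}:F\to\mathbb{R}^{(K+1)(T-C+1)}$. Under Assumption \ref{assp_1}, $f$ is $C^1$, and it is composed with the affine maps \eqref{eq:defLoad}--\eqref{eq:defChargingWeight} that send $\bm{x}$ to the aggregate load $\bm{z}$. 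Therefore each cost $u_t(\bm{x})$ and each partial derivative of $\Pi^k$ with respect to the components of $\bm{x}^k$ is a continuous function of $\bm{x}$. Interpreting $\bm{U}^0(\bm{x})$ as the vector of individual costs $(u_t(\bm{x}))_t$ that drives the Wardrop conditions \eqref{eqcnd_str}, both $\bm{U}^0$ and every $\bm{U}^k=\nabla_{\bm{x}^k}\Pi^k$ are continuous, so $\bm{U}$ is continuous on $F$.

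Having verified these hypotheses, Hartman--Stampacchia delivers some $\bm{x}^*\in F$ satisfying \eqref{eq:varineq}, and Proposition \ref{prop:CEineqvar} then identifies $\bm{x}^*$ as a composite equilibrium. The argument is essentially routine and the only delicate point is the continuity step, where one has to be slightly careful about the exact meaning of $\bm{U}^0$ in the composite (nonatomic plus atomic) setting; with the Wardrop interpretation above, continuity is immediate from $f\in\mathcal{C}^1$. Observe that no monotonicity or strict convexity of $\bm{U}$ is used, so existence is obtained without any assumption on $f$ beyond Assumption \ref{assp_1}; uniqueness would of course require a stronger hypothesis, typically strict monotonicity of $\bm{U}$.
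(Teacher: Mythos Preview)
Your proposal is correct and follows essentially the same route as the paper: verify that $F$ is nonempty, convex and compact and that $\bm{U}$ is continuous (the paper does this in a single sentence, invoking only the continuity of $f$), then apply the standard Hartman--Stampacchia existence result for variational inequalities (cited in the paper as \cite{Kin86}), and finally use Proposition~\ref{prop:CEineqvar} to conclude. Your version is simply more explicit about the structural checks and about the correct interpretation of $\bm{U}^0$ as the vector $(u_t)_t$ driving the Wardrop conditions.
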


\begin{proof}
Because of the continuity of $f$, $\bm{U}$ is continuous on the compact and convex set $\bm{F}$. The variational inequality (\ref{eq:varineq}) thus has a solution \cite{Kin86}. Given Prop. \ref{prop:CEineqvar}, this (eventually these) solution(s) is (are) a composite equilibrium(s).
\end{proof}

\bigskip

We also have an important property concerning the comparison between the cost of individuals, the average cost in a coalition and the social cost. Indeed, by the definition of CE, individuals choose the least expensive alternative(s) at the CE, therefore one obtains the following general result immediately.
\begin{proposition}
\label{prop:ordCost}
At a composite equilibrium,
\begin{equation}
\Pi^{0}(\!M\!)\leq \Pi(\!M\!) \leq \Pi^k(\!M\!) ,\quad \forall M \in (0,1], \ \forall k, \ 1 \leq k \leq K.
\end{equation}
\end{proposition}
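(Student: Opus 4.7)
The plan is to reduce everything to the costs $u_t(\bm{x})$ of the pure strategies and exploit the Wardrop condition \eqref{eqcnd_str} for the nonatomic individuals. My first step would be to show that, at a composite equilibrium, the individuals' average cost is exactly the minimum over $t$:
\begin{equation*}
\Pi^{0}(\bm{x}) \;=\; \min_{1\le s\le T-C+1} u_{s}(\bm{x}).
\end{equation*}
Indeed, $\Pi^0(\bm{x})=\frac{1}{M^0}\sum_{t}x^0_t u_t(\bm{x})$ is a convex combination of the values $u_t(\bm{x})$, and by \eqref{eqcnd_str} the combination is supported on indices $t$ for which $u_t(\bm{x})$ attains the common minimum. (If $M^0=0$, one simply takes $\Pi^0$ to be this minimum by convention, as the proposition's range $M\in(0,1]$ for the coalition size already excludes that degenerate case.)

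Next, I would apply the same convex-combination observation to each coalition: since
\begin{equation*}
\Pi^{k}(\bm{x}) \;=\; \frac{1}{M^k}\sum_{t=1}^{T-C+1} x^k_t\, u_t(\bm{x})
\end{equation*}
is a weighted average of the same $u_t(\bm{x})$, it is bounded below by $\min_s u_s(\bm{x})=\Pi^0(\bm{x})$. This already gives the ``outer'' inequality $\Pi^0\le \Pi^k$ for every $k$. For the social cost, I would then rewrite
\begin{equation*}
\Pi(\bm{y}) \;=\; \sum_{t=1}^T z_t\, f(L_t+Pz_t) \;=\; \sum_{k=0}^K \sum_{t=1}^T y^k_t\, f(L_t+Pz_t) \;=\; \sum_{k=0}^K M^k\, \hat{\Pi}^{k}(\bm{y}),
\end{equation*}
using \eqref{eq:defChargingWeight} and the definition of $\hat{\Pi}^k$. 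Since $\sum_{k=0}^K M^k=1$, the social cost is a convex combination of $\Pi^0,\Pi^1,\ldots,\Pi^K$, and the left inequality $\Pi^0\le \Pi$ follows immediately from $\Pi^j\ge \Pi^0$ for all $j$.

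The only genuinely subtle step is the right inequality $\Pi\le\Pi^k$, which is what I would flag as the main ``obstacle'' to a one-line argument. In the context of this proposition the relevant setting is a single coalition of mass $M=M^k\in(0,1]$ facing individuals of mass $1-M$ (this is the scope in which the notation $\Pi^0(M),\Pi(M),\Pi^k(M)$ of the statement is used). Then the convex-combination identity collapses to
\begin{equation*}
\Pi \;=\; (1-M)\,\Pi^{0} + M\,\Pi^{k},
\end{equation*}
and the bound $\Pi^0\le \Pi^k$ obtained in the previous paragraph yields $\Pi\le\Pi^k$ at once. Putting the two bounds together gives $\Pi^0(M)\le \Pi(M)\le\Pi^k(M)$ for every admissible $M$, which is the stated result.
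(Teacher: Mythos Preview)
Your argument is correct and is exactly the fleshed-out version of what the paper leaves as a one-line remark (``by the definition of CE, individuals choose the least expensive alternative(s) at the CE, therefore one obtains the following general result immediately''). The paper does not spell out the convex-combination identity $\Pi=\sum_{k=0}^K M^k\,\hat{\Pi}^k$ nor the reduction to a single coalition for the right inequality; your proof makes both points explicit. In particular, you are right to flag that $\Pi\le\Pi^k$ is not a consequence of the Wardrop condition alone in a genuinely multi-coalition setting, and that the $(M)$-notation in the statement (used throughout Section~\ref{sec:partCase}) signals the single-coalition configuration $M^k=M$, $M^0=1-M$, in which your identity $\Pi=(1-M)\Pi^0+M\Pi^k$ closes the argument immediately.
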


%%%%%%%%%%%%%%%%%%%%%%%%%
\section{Particular case of three time-slots}
\label{sec:partCase}
%%%%%%%%%%%%%%%

In this section, the properties of composite equilibrium are studied for a charging game $\mathcal{G}(T,C,f,(M^i)_{i=0}^K)$ with $T=3$ time-slots, a charging duration $C=2$, a unique coalition of size $M\in (0,1]$ and a group of individuals of total weight $1-M$. This corresponds to a situation where there are a peak, an off-peak and a standard time-slot. This also suits the case of charging places where EVs do not stay a long time, e.g. a parking. Indeed, because the charging rate will not vary with a small time step, this leads to situations where the number of time periods considered is very small. Finally, this is a first step to determine some intuitive theoretical results that could be then observed by simulations of cases with a bigger number of time-slots.

%%%%%%%%%%%%%%%%%%%%%%%%%
\subsection{Notations of this particular case}
\label{subsec:notPartCase}
%%%%%%%%%%%%%%%

An EV has two alternatives:
\begin{itemize}
\item alternative $1$: charging at $t=1$ and $t=2$;
\item alternative $2$: charging at $t=2$ and $t=3$.
\end{itemize}

For the individuals, these are just their strategies. In the coalition, the aggregator assigns an alternative to each EV.

\medskip

Time indexes are omitted to simplify the notations. Let $x^{1}$ (resp. $M-x^{1}$) denote the weight of EV to which the aggregator assigns alternative $1$ (resp. alternative $2$), and $x^{0}$ (resp. $1-M-x^{0}$) the weight of the individuals choosing alternative $1$ (resp. alternative $2$). Therefore, $x^{1}\in [0,M]$, $x^{0}\in [0,1-M]$, and they respectively characterize the strategy of the coalition and the choices of the individuals. To simplify the notations, we set here $P=1$ without changing the essential nature of the results provided here\footnote{This can be done without loss of generality because this is equivalent to scaling $\bm{L}$.}. Since all the EVs are charging at time-slot $t=2$, the per-unit cost for this time-slot, $f(1+L_2)$, is common to all the EVs. Thus, we simply need to study the following costs

\begin{equation}
\begin{cases}
\tilde{\Pi}^0=\Pi^0-f(1+L_2) \\
\tilde{\Pi}^1=\Pi^1-f(1+L_2) \\
\tilde{\Pi}=\Pi-f(1+L_2) \\
\end{cases}
\textrm{.}
\end{equation}

Considering the choice made by the coalition, $\tilde{\Pi}^1$ is a function of $(x^{0},x^{1})$, defined on $[0,1]^2$: %(though $(x^{1},y)\in [0,M]\times [0,1-M]$)
\begin{scriptsize}
\begin{equation*}
\tilde{\Pi}^{1}(x^{0},x^{1})=\frac{x^{1} f(L_{1}+x^{1}+x^{0}) +(M-x^{1})f(L_{3}+1-x^{1}-x^{0})}{M}.
\end{equation*}
\end{scriptsize}

For example, in the Joule losses case, $f(L)=L^2$,
%\cite{Clement2009},
%\cite{Clement-Nyns2009},

\begin{scriptsize}
\begin{equation*}
\tilde{\Pi}^{1}(x^{0},x^{1})= \frac{x^{1} (L_{1}+x^{1}+x^{0})^2 +(M-x^{1})(L_{3}+1-x^{1}-x^{0})^2}{M}.
\end{equation*}
\end{scriptsize}

%%%%%%%%%%%%%%%%%%%%%%%%%
\subsection{Configuration of composite equilibrium}
\label{subsec:configCE}
%%%%%%%%%%%%%%%

Without loss of generality, suppose that $L_1 \geq L_3$. This means that the cost of the first time-slot (resp. alternative 1) is higher than that of the third time-slot (resp. alternative 2) without EV charging.

The explicit computation of the CE $x^*$ using the conditions (\ref{eqcnd_str}) and (\ref{eqcnd_coal}) is omitted. We summarize the results here. Observe that the individuals' common cost, $\Pi^0$, the coalition's average cost, $\Pi^1$, and the social cost $\Pi$ are the key variables for analyzing the efficiency of the CE.

\medskip

\paragraph{Case 1: $L_{1} \geq L_{3} + 1$}

\begin{itemize}
%%%%%%%%%%%%%%%%%
%%%%%%%%%%%%%%%%
\item For $M \in (0, \frac{f(L_{1})-f(1+L_{3})}{f^{\prime}(1+L_{3})}]$ (or $M \in (0,1]$ if $\frac{f(L_{1})-f(1+L_{3})}{f^{\prime}(1+L_{3})}\geq 1$)
\begin{equation}\label{cfg_2}
x^{* 1} = 0,\quad x^{* 0} = 0
\end{equation}
\begin{equation}\label{cost_2}
 \tilde{\Pi}=\tilde{\Pi}^{0}=\tilde{\Pi}^1= f(1+L_{3})
\end{equation}
%%%%%%%%%%%%%%%%%%
%%%%%%%%%%%%%%
\item For $M \in (\frac{f(L_{1})-f(1+L_{3})}{f^{\prime}(1+L_{3})},1]$ (possible only if $\frac{f(L_{1})-f(1+L_{3})}{f^{\prime}(1+L_{3})}<1$)
\begin{equation}\label{cfg_1}
\begin{cases}
0<x^{* 1}<M:\\
f(L_{1}+x^{* 1})+x^{* 1}f^{\prime}(L_{1}+x^{* 1})\\
\hspace{0.3cm}=f(1+L_{3}-x^{* 1})+(M-x^{* 1})f^{\prime}(1+L_{3}-x^{* 1}) \\
x^{* 0} = 0.
\end{cases}
\end{equation}
\begin{equation}\label{cost_1}
\begin{cases}
\tilde{\Pi}  = x^{* 1} f(L_{1}+x^{* 1}) + (1-x^{* 1})f(1+L_{3}-x^{* 1})\\
 \tilde{\Pi}^{0} =f(1+L_{3}-x^{* 1})\\
 \tilde{\Pi}^1  = \frac{1}{M}\big[ x^{* 1}f(L_{1}+x^{* 1}) + (M-x^{* 1})f(1+L_{3}-x^{* 1})\big]
 \end{cases}
\end{equation}
\end{itemize}

\paragraph{Case 2: $L_{3} \leq L_{1} < L_{3}+1$}
\begin{itemize}
%%%%%%%%%%%%%%%%%%%
%%%%%%%%%%%%%%%%%%
\item For $M < 1+L_{3}-L_{1} (\leq 1)$
\begin{equation}\label{cfg_3}
x^{* 1}=\frac{M}{2}, \quad x^{* 0}=\frac{1+L_{3}-L_{1}-M}{2}
\end{equation}
\begin{equation}\label{cost_3}
\tilde{\Pi}=\tilde{\Pi}^{0}=\tilde{\Pi}^1= f\big(\frac{1+L_{1}+L_{3}}{2}\big)
\end{equation}

%%%%%%%%%%%%%%%%%%
%%%%%%%%%%%%%%%%%
\item For $M \geq 1+L_{3}-L_{1}(>0)$
\begin{equation}\label{cfg_4}
\begin{cases}
\frac{1+L_{3}-L_{1}}{2}<x^{* 1}<\frac{M}{2}:\\
f(L_{1}+x^{* 1})+x^{* 1}f^{\prime}(L_{1}+x^{* 1})\\
\hspace{0.3cm}=f(1+L_{3}-x^{* 1})+(M-x^{* 1})f^{\prime}(1+L_{3}-x^{* 1}) \\
x^{* 0} = 0
\end{cases}
\end{equation}
\begin{equation}\label{cost_4}
\begin{cases}
 \tilde{\Pi}  =x^{* 1} f(L_{1}+x^{* 1}) + (1-x^{* 1})f(1+L_{3}-x^{* 1})\\
 \tilde{\Pi}^{0} =f(1+L_{3}-x^{* 1})\\
 \tilde{\Pi}^1 = \frac{1}{M}\big[ x^{* 1} f(L_{1}+x^{* 1}) + (M-x^{* 1})f(1+L_{3}-x^{* 1})\big]
 \end{cases}
\end{equation}
\end{itemize}

%%%%%%%%%%%%%%%%%%%%%%%%%
\subsection{Properties of composite equilibrium}
\label{subsec:propCE}
%%%%%%%%%%%%%%%

Using the configuration of the CE obtained previously, its main properties, namely existence, uniqueness, and variation with the size of the coalition, are now investigated.

\begin{proposition}
For all $M\in (0,1]$, there exists a unique composite equilibrium $(x^{*0},x^{*1})$.
\end{proposition}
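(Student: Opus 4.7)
The plan is to handle existence and uniqueness separately. Existence follows almost immediately from Theorem~\ref{thm:existCE} once the convexity hypothesis is verified in this three-slot, one-coalition setting. Using the notation of Section~\ref{subsec:notPartCase},
\[
\tilde{\Pi}^1(x^0,x^1)=\frac{1}{M}\bigl[x^1 f(L_1+x^0+x^1)+(M-x^1)f(L_3+1-x^0-x^1)\bigr],
\]
and I would argue that each of the two summands is convex in $x^1$ on $[0,M]$: the first is of the form $x\mapsto x\,f(a+x)$ with $a\geq 0$, whose derivative $f(a+x)+x f'(a+x)$ is nondecreasing since both $f$ and $f'$ are nondecreasing with $f'\geq 0$ (Assumption~\ref{assp_1}); the second reduces to the same pattern after the substitution $u=M-x^1$. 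Convexity of $\tilde{\Pi}^1$ in $x^1$ then follows, and Theorem~\ref{thm:existCE} delivers the existence of a composite equilibrium.

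For uniqueness, I would walk through the four subcases \eqref{cfg_2}, \eqref{cfg_1}, \eqref{cfg_3}, \eqref{cfg_4} of Section~\ref{subsec:configCE}, which tile the parameter region $\{L_1\geq L_3,\ M\in(0,1]\}$ disjointly. In configurations \eqref{cfg_2} and \eqref{cfg_3}, the pair $(x^{*0},x^{*1})$ is given in closed form, so there is nothing to prove. In \eqref{cfg_1} and \eqref{cfg_4}, $x^{*0}=0$ is forced by the Wardrop condition, and $x^{*1}$ is defined implicitly by $\phi_1(x)=\phi_2(x)$, where
\[
\phi_1(x):=f(L_1+x)+x f'(L_1+x),\qquad \phi_2(x):=f(1+L_3-x)+(M-x)f'(1+L_3-x).
\]
The key lemma is that $\phi_1$ is strictly increasing and $\phi_2$ is strictly decreasing on the relevant interval. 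Using only Assumption~\ref{assp_1}: for $0\leq x<x'\leq M$, $f(L_1+x)<f(L_1+x')$ by strict monotonicity, $f'(L_1+x)\leq f'(L_1+x')$ by convexity of $f$, and $x f'(L_1+x)<x' f'(L_1+x')$ since $f'>0$ and $x<x'$; summing gives $\phi_1(x)<\phi_1(x')$. The argument for $\phi_2$ is the mirror image. Hence $\phi_1-\phi_2$ is strictly increasing, so $\phi_1=\phi_2$ has at most one root, which combined with existence fixes $x^{*1}$ uniquely.

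The main obstacle I anticipate is bookkeeping rather than analysis: one must verify that the four subcases really are disjoint and jointly exhaustive, and in particular that within each implicit-equation case the unique root of $\phi_1=\phi_2$ lies strictly inside the open interval prescribed, not on its boundary (where it would belong to a neighbouring case and produce a spurious second equilibrium). The genuine analytic content is entirely captured by the monotonicity lemma on $\phi_1$ and $\phi_2$ above; everything else is case checking against the thresholds $L_1-L_3$ and $[f(L_1)-f(1+L_3)]/f'(1+L_3)$.
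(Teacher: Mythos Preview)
Your proposal is correct and follows the same overall route as the paper: existence via Theorem~\ref{thm:existCE}, uniqueness via the exhaustive case analysis of Section~\ref{subsec:configCE}. The paper's own proof is in fact only two lines long; it simply invokes Theorem~\ref{thm:existCE} under Assumption~\ref{assp_1} and then states that uniqueness ``directly follows from the summary of the results given just before,'' without spelling out either the convexity check or the monotonicity argument for the implicit equation. Your write-up supplies precisely those missing details---the verification that $\tilde{\Pi}^1$ is convex in $x^1$, and the lemma that $\phi_1-\phi_2$ is strictly increasing under Assumption~\ref{assp_1} alone---so it is a strict elaboration of the paper's sketch rather than a different approach.
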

\begin{proof}
The existence results from Thm. \ref{thm:existCE} given that Assumption \ref{assp_1} holds. The uniqueness directly follows from the summary of the results given just before.
\end{proof}

\medskip

Observe also that the individuals' weight on the first alternative at CE is independent of the charging cost function verifying Assumption \ref{assp_1}.

Since the equilibrium $(x^{*0},x^{*1})$ is unique for each coalition size $M\in (0,1]$, one can now consider the following quantities as functions of $M$ and omit the superscript $*$: the quantity of EV charging from $t=1$ in the coalition, $x^{1}$, the quantity of individuals taking strategy 1, $x^{0}$, the individuals' common cost, $\Pi^0$, the coalition's average cost $\Pi^1$, and the social cost, $\Pi$.

Additional properties of the CE configuration $(x^0(M),x^1(M))$ can be deduced easily from the results in Section~\ref{subsec:configCE}. To this end, an additional hypothesis on the charging cost function will be needed.

%%%%%%%%%%%%%%%%%%%%%%%%%%%%%%%%%%
%%%%%%%%%%%%%%%%%%%%%%%%%%%%%%%%%%
\begin{assumption}\label{assp_2}
Charging cost function $f$ is of class $\mathcal{C}^2$ on $[0,W]$.
\end{assumption}

\begin{proposition}\label{prop:x}
Under Assumption \ref{assp_2}, function $x^{1}$ is continuous and increasing in $M$ on $(0,1]$. More precisely,
\begin{enumerate}
 \item if $L_{1} \geq L_{3} + 1$ and $\frac{f(L_{1})-f(1+L_{3})}{f^{\prime}(1+L_{3})}\geq 1$, then $x^{1}$ is constant on $(0,1]$;
 \item if $L_{1} \geq L_{3} + 1$ and $\frac{f(L_{1})-f(1+L_{3})}{f^{\prime}(1+L_{3})}< 1$, then $x^{1}$ is constant on $(0, \frac{f(L_{1})-f(1+L_{3})}{f^{\prime}(1+L_{3})}]$ and it is strictly increasing in $M$ on $(\frac{f(L_{1})-f(1+L_{3})}{f^{\prime}(1+L_{3})},1]$;
 \item if $L_{3} \leq L_{1} < L_{3}+1$, then $x^{1}$ is strictly increasing in $M$ on $(0,1]$.
\end{enumerate}
\end{proposition}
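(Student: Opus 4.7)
The plan is to treat the three cases of the proposition separately by reading off the explicit configuration of the CE established in Section~\ref{subsec:configCE}, and to invoke the implicit function theorem on the pieces where $x^{*1}$ is only implicitly defined. The trivial pieces are immediate: case~(1) and the first sub-interval of case~(2) reduce to $x^{1}\equiv 0$ via~(\ref{cfg_2}), while on $(0,1+L_3-L_1)$ in case~(3) the relation~(\ref{cfg_3}) gives the linear, strictly increasing expression $x^{1}=M/2$. All non-trivial work is concentrated on the tail $\bigl(\tfrac{f(L_1)-f(1+L_3)}{f'(1+L_3)},1\bigr]$ of case~(2) and the tail $[1+L_3-L_1,1]$ of case~(3), where $x^{1}$ satisfies the common implicit equation appearing in~(\ref{cfg_1}) and~(\ref{cfg_4}).

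On these intervals I would introduce
\begin{equation*}
F(M,y) := f(L_1+y) + y\,f'(L_1+y) - f(1+L_3-y) - (M-y)\,f'(1+L_3-y),
\end{equation*}
so that the defining relation becomes $F(M,x^{1})=0$. Assumption~\ref{assp_2} makes $F$ of class $\mathcal{C}^1$. A direct differentiation will show that $\partial_y F$ is a sum of two $f'(\cdot)$-terms, with positive values by Assumption~\ref{assp_1}, plus two terms of the form $(\text{nonnegative})\cdot f''(\cdot)$, again nonnegative by convexity of $f$, since on this branch $y\in(0,M)$. Hence $\partial_y F>0$. Meanwhile $\partial_M F = -f'(1+L_3-y) < 0$. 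The implicit function theorem then produces a $\mathcal{C}^1$ map $M\mapsto x^{1}(M)$ with strictly positive derivative $-\partial_M F/\partial_y F$, yielding strict monotonicity on each implicit branch.

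The remaining point is continuity of $x^{1}$ at the junction between a constant (or linear) branch and an implicit one. I would handle this by plugging the boundary value of $y$ into $F(M,y)=0$: taking $y=0$ recovers exactly $M=\tfrac{f(L_1)-f(1+L_3)}{f'(1+L_3)}$, matching the threshold in case~(2), while taking $y=(1+L_3-L_1)/2$ collapses both arguments of $f$ and $f'$ to $(1+L_1+L_3)/2$ and forces $M=1+L_3-L_1$, matching the threshold in case~(3). I expect the sign check for $\partial_y F$ (where both Assumptions~\ref{assp_1} and~\ref{assp_2} are used) and these boundary matchings to be the only mildly delicate steps; once they are in place the proposition follows from a routine implicit function theorem argument together with the elementary monotonicity of $M/2$.
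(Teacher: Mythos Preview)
Your proposal is correct and follows the same strategy as the paper, which simply states that ``both the continuity of $x^{1}$ on $(0,1]$ and the monotonicity are obtained by using the implicit function theorem on the equations characterizing $x^{1}$.'' Your write-up is in fact considerably more detailed than the paper's two-line sketch: you make explicit the map $F(M,y)$, the sign analysis of $\partial_y F$ and $\partial_M F$ under Assumptions~\ref{assp_1}--\ref{assp_2}, and the boundary matchings at the thresholds, all of which the paper leaves to the reader.
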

\begin{proof}
Both the continuity of $x^{1}$ on $(0,1]$ and the monotonicity are obtained by using the implicit function theorem on the equations characterizing $x^{1}$.
\end{proof}

\medskip

The proofs of the following results of this paper, which are also based on the application of the implicit function theorem, are omitted. Considering the weight of individuals using strategy $1$, a similar result, but with the opposite monotonicity, is obtained.

\medskip

\begin{proposition}\label{prop:y}
Function $x^{0}$ is continuous and decreasing in $M$ on $(0,1]$. More precisely,
\begin{enumerate}
 \item if $L_{1} \geq L_{3} + 1$, then $x^{0}$ is constant ($x^{0}=0$) on $(0,1]$;
 \item if $L_{3} \leq L_{1} < L_{3}+1$, then $x^{0}$ is strictly (linearly) decreasing in $M$ on $(0,1+L_{3}-L_{1})$ and it is constant ($x^{0}=0$) on $[1+L_{3}-L_{1},1]$.
\end{enumerate}
\end{proposition}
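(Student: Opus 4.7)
The plan is to read off the monotonicity and continuity of $M \mapsto x^0(M)$ directly from the explicit closed-form expressions for $x^{*0}$ compiled in Section~\ref{subsec:configCE}. Unlike Proposition~\ref{prop:x}, where the implicit function theorem was invoked to handle a branch defined only implicitly by an equation, here every sub-case already yields $x^{*0}$ as either the constant $0$ or an explicit affine function of $M$. The argument will therefore reduce to straightforward case checking plus a continuity verification at the unique junction point.

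For claim (1), when $L_1 \geq L_3 + 1$, the two sub-cases covered by \eqref{cfg_2} and \eqref{cfg_1} both impose $x^{*0}=0$, regardless of which sub-interval of $(0,1]$ contains $M$. Consequently $x^0 \equiv 0$ on $(0,1]$, which is trivially continuous and (weakly) decreasing.

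For claim (2), when $L_3 \leq L_1 < L_3+1$, I set $M_\star := 1+L_3-L_1 \in (0,1]$. On $(0,M_\star)$, equation \eqref{cfg_3} gives $x^{*0}(M)=(M_\star-M)/2$, which is affine with slope $-\tfrac{1}{2}$, hence strictly (in fact linearly) decreasing. On $[M_\star,1]$, equation \eqref{cfg_4} gives $x^{*0}(M)=0$, so constant. Continuity at $M=M_\star$ is immediate because the affine expression $(M_\star-M)/2$ vanishes exactly at $M_\star$, matching the constant branch from the right. Gluing the two pieces produces a continuous, piecewise affine, weakly decreasing function on the whole of $(0,1]$, and strictly decreasing on $(0,M_\star)$ as required.

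I expect the main obstacle to be essentially non-existent: the only delicate point is the left-continuity of $x^0$ at the junction $M=M_\star$, which is resolved by the direct substitution above. All the genuinely analytic content, namely the derivation of \eqref{cfg_3} and \eqref{cfg_4}, has already been carried out when characterizing the CE in Section~\ref{subsec:configCE}; by contrast with the proof of Proposition~\ref{prop:x}, no appeal to the implicit function theorem is needed here.
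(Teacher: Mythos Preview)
Your proof is correct. The paper omits the proof of this proposition, stating only that it is ``also based on the application of the implicit function theorem'' like Proposition~\ref{prop:x}. Your approach is genuinely different and, in fact, more elementary: you observe that in every sub-case of Section~\ref{subsec:configCE} the value of $x^{*0}$ is given \emph{explicitly} (either $0$ or the affine expression $(1+L_3-L_1-M)/2$), so the monotonicity and continuity can be read off directly without any appeal to the implicit function theorem. The paper's blanket reference to the implicit function theorem is presumably a loose statement covering Propositions~\ref{prop:x}--\ref{prop:x_order2} collectively; for $x^1$ it is genuinely needed because \eqref{cfg_1} and \eqref{cfg_4} define $x^{*1}$ only implicitly, but for $x^0$ your direct verification is both sufficient and cleaner. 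The only point requiring care---continuity at the junction $M_\star=1+L_3-L_1$---you handle correctly by noting that the affine branch vanishes exactly there.
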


An intuitive interpretation of Prop.\ref{prop:x} and Prop.\ref{prop:y} is as follows. The bigger the coalition, the more EV it puts on the more expensive charging alternative, and the less individuals who choose the more expensive alternative. This highlights that bigger coalitions integrate more externalities. When the coalition is of size one, this leads to the social optimum.

\medskip

The following proposition now characterizes the convexity of the EV put on strategy $t=1$ at equilibrium, $x^1$.

%%%%%%%%%%%%%%%%%%%%%%%%%%%%%%%%%%
\begin{proposition}\label{prop:x_order2}
If $f$ is linear $x\mapsto x$, quadratic $x\mapsto x^2$, or exponential $x\mapsto e^{\beta x}$ $(\beta>0)$, then function $x^{1}$ is concave in $M$ on $(0,1]$.
\end{proposition}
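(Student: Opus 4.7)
The plan is to partition $(0,1]$ into the regimes from Section~\ref{subsec:configCE} and establish concavity on each piece. On the trivial regimes, $x^1\equiv 0$ in~\eqref{cfg_2} and $x^1=M/2$ in~\eqref{cfg_3}, the function is affine in $M$ and the claim is immediate. All the real work concentrates on the interior regimes~\eqref{cfg_1} and~\eqref{cfg_4}, where $x^1=x^1(M)$ is defined implicitly by
$$
G(x^1,M):=f(L_1+x^1)+x^1 f'(L_1+x^1)-f(1+L_3-x^1)-(M-x^1)f'(1+L_3-x^1)=0.
$$
On these regimes I would apply the implicit function theorem, legitimate under Assumption~\ref{assp_2}, and reduce the concavity question to a monotonicity check on $dx^1/dM$.

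Writing $p=L_1+x^1$ and $q=1+L_3-x^1$, direct differentiation of $G(x^1,M)=0$ gives
$$
\frac{dx^1}{dM}=\frac{f'(q)}{2f'(p)+2f'(q)+x^1 f''(p)+(M-x^1)f''(q)},
$$
which is positive under Assumption~\ref{assp_1}, recovering the monotonicity stated in Prop.~\ref{prop:x}. Concavity then reduces to showing that this ratio is non-increasing in $M$, and I would verify this separately for each of the three admissible cost functions.

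For $f(u)=u$ the ratio collapses to $1/4$, so $x^1$ is affine. For $f(u)=u^2$ the identity $p+q=1+L_1+L_3$ produces
$$
\frac{dx^1}{dM}=\frac{q}{2(1+L_1+L_3)+M};
$$
the numerator decreases in $M$ (since $x^1$ increases by Prop.~\ref{prop:x}) while the denominator increases, so the ratio decreases. For $f(u)=e^{\beta u}$ a rearrangement yields
$$
\frac{dx^1}{dM}=\frac{1}{(2+\beta x^1)\,e^{\beta(p-q)}+2+\beta(M-x^1)};
$$
the bound $dx^1/dM<1/2$ forces $M-x^1$ to be increasing, while $p-q=L_1-L_3-1+2x^1$ and $2+\beta x^1$ both increase with $M$, so the denominator increases and the ratio decreases.

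The main obstacle I anticipate is the junction between regimes. At the Case~2 boundary $M^{\star}=1+L_3-L_1$, substituting $p=q$ and $x^1=M^{\star}/2$ into the derivative formula yields a right-derivative of $1/\bigl(4+2x^1 f''(q)/f'(q)\bigr)\leq 1/4$, strictly below the left-derivative $1/2$, so monotonicity of $dx^1/dM$ persists across the transition and the piecewise concavity glues into global concavity on $(0,1]$. The Case~1 transition, where the constant branch $x^1=0$ meets a strictly increasing branch, must be read on each maximal interval on which $x^1$ is non-trivial; on every such interval the argument above supplies the desired concavity.
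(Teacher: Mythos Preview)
Your approach is exactly the one the paper indicates (the proof is omitted there, with only a pointer to the implicit function theorem), and your derivative formula together with the case-by-case monotonicity checks for the linear, quadratic, and exponential costs are correct. Your caveat about the Case~1 junction is also well taken: when $L_1\geq L_3+1$ and $(f(L_1)-f(1+L_3))/f'(1+L_3)<1$, a branch $x^1\equiv 0$ followed by a strictly increasing branch produces a \emph{convex} kink, so concavity on all of $(0,1]$ as literally stated cannot hold there---the paper glosses over this, and your reading (concavity on the interval where $x^1$ is non-trivial, and in particular on all of $(0,1]$ in Case~2) is the natural repair.
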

%%%%%%%%%%%%%%%%%%%%%%%
%%%%%%%%%%%%%%%%%%%

This proposition expresses a phenomenon of saturation. Although the coalition puts more EVs on the more expensive alternative when its size increases, the additional weight on this alternative decreases with respect to its size. Note that some properties for the function $x^1$ such as its linearity or strict concavity are available for some choices of the parameters $L_1$ and $L_3$ as for Prop. \ref{prop:x}.

%%%%%%%%%%%%%%%%%%%%%%%%%%%%
\subsection{Cost at the composite equilibrium}
\label{subsec:CostGenCase}
%%%%%%%%%%%%%%%%%%%%%%%%%%%%

First of all, according to \eqref{cfg_2}, \eqref{cfg_1}, \eqref{cfg_3} and \eqref{cfg_4}, the continuity of $x^0$ and $x^1$ leads to that of $\Pi^0$, $\Pi^1$ and $\Pi$ on their domains of definition. In particular,  \eqref{cfg_1} and \eqref{cfg_3} show that $\Pi^0$ can be extended to $M=1$ in a continuous way. Thus, one has the following corollary of Prop. \ref{prop:x}.
%%%%%%%%%%%%%%%%%%%%%
\begin{corollary}
Under Assumption \ref{assp_2}, the individual's cost $\Pi^0$, the average cost of the coalition $\Pi^1$, and the social cost $\Pi$ at the CE are continuous in $M$ on $(0,1]$.
\end{corollary}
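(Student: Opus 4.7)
The plan is a straightforward composition-of-continuous-functions argument combined with bookkeeping at the transition points between the four regimes of Sec.~\ref{subsec:configCE}. The continuity of $x^0(M)$ and $x^1(M)$ has already been delivered by Prop.~\ref{prop:x} and Prop.~\ref{prop:y}, and $f$ is continuous by Assumption~\ref{assp_1} (reinforced by Assumption~\ref{assp_2}), so inside the interior of each of the four subintervals defined by Cases~1 and~2 the three functions $\tilde{\Pi}^0$, $\tilde{\Pi}^1$, $\tilde{\Pi}$ are explicit continuous combinations of $M$, $x^0(M)$, $x^1(M)$ and $f$-values. The factor $1/M$ appearing in $\tilde{\Pi}^1$ creates no trouble since $M>0$ on $(0,1]$ and is bounded away from $0$ on any compact sub-interval.

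The only substantive work is therefore to match one-sided limits at the two transition points. First, at $M_0 := \frac{f(L_1)-f(1+L_3)}{f'(1+L_3)}$ in Case~1 (when $M_0<1$): for $M\leq M_0$ formula~(\ref{cost_2}) gives the common value $f(1+L_3)$, while for $M\to M_0^+$ continuity of $x^1$ (Prop.~\ref{prop:x}) gives $x^{*1}(M)\to 0$; substituting into~(\ref{cost_1}) yields $\tilde{\Pi}^0,\tilde{\Pi},\tilde{\Pi}^1 \to f(1+L_3)$ (the $x^{*1}/M$ factor kills the deviant term of $\tilde{\Pi}^1$). Second, at $M_1 := 1+L_3-L_1$ in Case~2: formula~(\ref{cost_3}) gives the common value $f\!\left(\tfrac{1+L_1+L_3}{2}\right)$ for $M\leq M_1$; for the right-hand side I would note that plugging $M=M_1$ and $x^{*1}=M_1/2$ into the implicit equation of~(\ref{cfg_4}) makes both arguments $L_1+x^{*1}$ and $1+L_3-x^{*1}$ equal to $\tfrac{1+L_1+L_3}{2}$, hence the equation holds, so by uniqueness (Prop. in \ref{subsec:propCE}) and continuity of $x^1$ the formulas~(\ref{cost_4}) also tend to $f\!\left(\tfrac{1+L_1+L_3}{2}\right)$ as $M\to M_1^+$.

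Continuity up to the endpoint $M=1$ for $\Pi^0$ follows from the continuous extension already noted before the statement: $\tilde{\Pi}^0 = f(1+L_3-x^{*1})$ in both~(\ref{cost_1}) and~(\ref{cost_4}) is a continuous function of $x^{*1}(M)$, and $x^1$ is continuous at $M=1$.

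The main (very mild) obstacle is purely bookkeeping: one must verify that the thresholds $M_0$ and $M_1$ separating the Cases are exactly the points at which the complementary-slackness conditions defining the equilibrium switch. Once this is observed, the limit matching is automatic rather than coincidental, and no additional regularity of $f$ beyond Assumption~\ref{assp_1} is needed for the conclusion (Assumption~\ref{assp_2} is used only implicitly through Prop.~\ref{prop:x} and Prop.~\ref{prop:y}).
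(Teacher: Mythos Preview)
Your proposal is correct and follows essentially the same approach as the paper: the paper presents this result as an immediate corollary of the continuity of $x^0$ and $x^1$ (Propositions~\ref{prop:x} and~\ref{prop:y}) via composition with the explicit formulas \eqref{cost_2}--\eqref{cost_4}, without spelling out the matching at the thresholds. Your write-up simply makes explicit the bookkeeping at $M_0$ and $M_1$ that the paper leaves implicit.
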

%%%%%%%%%%%%%%%%%%%%%
%%%%%%%%%%%%%%%%%%%%%

After this technical property, two issues arise to better understand the influence of the coalition on the costs:
\begin{enumerate}
\item comparison between the cost of individuals, the average cost in the coalition and the social cost;
\item analysis of the monotonicity of these costs with respect to the size of the coalition $M$.
\end{enumerate}

Concerning the first issue, the answer is given in the general case by Prop. \ref{prop:ordCost}: the cost of individuals is smaller than the social cost which is itself smaller than the average cost in the coalition. Let us now focus on the impact of the size of the coalition on $\Pi^0$, $\Pi$, and $\Pi^1$. This is of primary interest for a social planner who tries to analyze the cost of the different entities under its supervision and to decide if it is worthwhile to encourage the formation of coalitions.

%%%%%%%%%%%%%%%%555
\begin{proposition}
Under Assumption \ref{assp_2}, the individual's cost $\Pi^0$, the average cost of the coalition $\Pi^1$, and the social cost $\Pi$ at the CE are decreasing in the size of the coalition $M$.
\end{proposition}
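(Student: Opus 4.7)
The plan is to split along the four regimes of Section~\ref{subsec:configCE} and apply the implicit function theorem in the two regimes where $x^{*1}$ varies with $M$. In the constant regimes (Case~1 with $M$ below the threshold, and Case~2(a)), the equilibrium load and all three costs are constant in $M$ by inspection of (\ref{cost_2}) and (\ref{cost_3}), so monotonicity is immediate. In the two nontrivial regimes I would set $a(M):=L_1+\phi(M)$, $b(M):=1+L_3-\phi(M)$ with $\phi(M):=x^{*1}(M)$, and recall that the equilibrium is characterized by
\begin{equation*}
f(a)+\phi f'(a)=f(b)+(M-\phi)f'(b).
\end{equation*}
Implicit differentiation (as already used for Prop.~\ref{prop:x}) gives $\phi'(M)=f'(b)\bigl/\bigl[2f'(a)+2f'(b)+\phi f''(a)+(M-\phi)f''(b)\bigr]\in(0,\tfrac12]$. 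A direct check shows that in each nontrivial regime the lower bound on $\phi$ together with the hypothesis on $L_1-L_3$ forces $a\geq b$, hence $f(a)\geq f(b)$.

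Next I would differentiate each of the three cost expressions from (\ref{cost_1}) and (\ref{cost_4}). For $\tilde\Pi^0=f(b)$ the chain rule immediately gives $d\tilde\Pi^0/dM=-f'(b)\phi'(M)<0$. For $M\tilde\Pi^1=\phi f(a)+(M-\phi)f(b)$, expanding the derivative and substituting the equilibrium identity $f(a)+\phi f'(a)-f(b)=(M-\phi)f'(b)$ causes the $\phi'$ terms to collapse, yielding $(M\tilde\Pi^1)'=f(b)$ and therefore $d\tilde\Pi^1/dM=-\phi\bigl(f(a)-f(b)\bigr)/M^2\leq 0$. Similarly, for $\tilde\Pi=\phi f(a)+(1-\phi)f(b)$ the same substitution collapses the expression to $d\tilde\Pi/dM=\phi'(M)f'(b)(M-1)\leq 0$ since $M\leq 1$. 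Because $\Pi^i=\tilde\Pi^i+f(1+L_2)$ and the latter term is independent of $M$, the monotonicity transfers to $\Pi^0$, $\Pi^1$ and $\Pi$.

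Finally I would invoke the continuity of $\Pi^0,\Pi^1,\Pi$ on $(0,1]$ from the preceding corollary to glue the constant pieces to the strictly decreasing pieces across the breakpoints $M=\bigl(f(L_1)-f(1+L_3)\bigr)/f'(1+L_3)$ and $M=1+L_3-L_1$, yielding global monotonicity on $(0,1]$.

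The step I expect to be the main obstacle is spotting the two algebraic cancellations obtained by inserting the equilibrium identity into the derivatives of $M\tilde\Pi^1$ and of $\tilde\Pi$: without them the derivatives look unwieldy and no sign can easily be read off. Everything else reduces to routine applications of the chain rule on the explicit formulas of Section~\ref{subsec:configCE}.
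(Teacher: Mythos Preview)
Your proposal is correct and follows essentially the same route the paper indicates: differentiating the implicit equilibrium equation for $x^{*1}$ via the implicit function theorem and then tracking the signs of $d\tilde\Pi^0/dM$, $d\tilde\Pi^1/dM$ and $d\tilde\Pi/dM$ casewise over the four regimes. The paper omits its proof but explicitly states it is ``also based on the application of the implicit function theorem'', which is exactly your argument; the algebraic cancellations you highlight are the natural way to close the computation.
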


The previous proposition shows that the bigger the coalition is, the better it is for everyone. However, considering Prop. \ref{prop:ordCost}, this leads to a "social dilemma", a situation in which collective interests are at odds with private interests. Indeed, the social optimum is attained when the coalition is global, i.e. $M=1$; meanwhile, according to Prop. \ref{prop:ordCost}, each EV prefers to act individually, leading to $M=0$. This phenomenon is similar to the one studied in \cite{Wan11}. Designing incentives to encourage EV to join the coalition could constitute a relevant extension of this work.

%%%%%%%%%%%%%%%%%%%%%%%%%
\section{Simulation results}
\label{sec:simu}
%%%%%%%%%%%%%%%%%%%%%%%%%

%%%%%%%%%%%%%%%%%%%%%%%%%%%%%%%%
\subsection{Quantifying the results of the particular case $T=3$, $C=2$}
\label{subsec:simuT3C2}
%%%%%%%%%%%%%%%%%%%%%%%%%%%%%%%%

We first investigate the particular case $T=3$, $C=2$. The non-EV load is supposed of the form $\bm{L}=(L_1,1,1)$ with $L_1 \geq 1$ representing the load at peak time of other electrical consumptions, $P=1$, and $\beta=1$ when the exponential charging cost function $f$ is considered. In turn, this provides a thorough application of the theoretical framework presented in the previous part. The CE configuration is thus directly given, by (\ref{cfg_2}) for example, or calculated by solving implicit equations, (\ref{cfg_1}) for example. The variation of the configuration of the CE and that of the equilibrium costs associated will be analyzed and quantified according to the charging cost function $f$ and the size of the coalition $M$.

Fig. \ref{fig:x1} and \ref{fig:x1bis} present the weight $x^1$ put by the coalition on alternative $1$ for  the three standard charging cost functions: linear, quadratic, and exponential. As theoretically claimed, when $L_1 \geq L_3 +1$ (cf. Fig. \ref{fig:x1}), $x^1$ is zero for small values of $M$, and it becomes positive from different thresholds of $M$ for different metrics $f$. When $L_1<L_3+1$ (cf. Fig. \ref{fig:x1bis}), $x^1$ is the same for all the metrics $f$ up to a common threshold $\bar{M}=1+L_3-L_1$, after which $x^1$ is different for different metrics. Also, one observes that $x^1$ is greater when $L_1$ is relatively lower or, equivalently, when time-slot $1$ is relatively less expensive. Take the linear cost metric as example. Fixing $L_3=1$, if $L_1=2.3$ (Fig. \ref{fig:x1}), $x^1$ remains $0$ till $M=0.3$ then it increases linearly to $0.18$ when $M=1$; while if $L_1=1.5$ (Fig. \ref{fig:x1bis}), $x^1$ increases in a piecewise linear manner from $0$ to $0.37$ while $M$ varies from $0$ to $1$.

\begin{figure}[!htbp]
\includegraphics[scale=0.6]{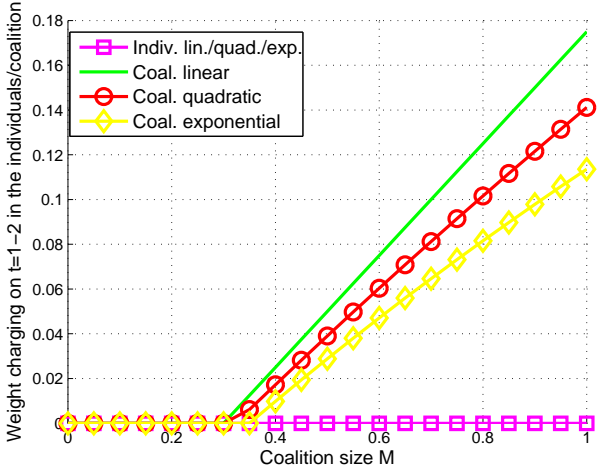}
\caption{\label{fig:x1} Configuration of CE according to the size of the coalition $M$ for $L_1=2.3 \geq L_3+1$: \textit{the individuals do not use the most expensive alternative $t=1-2$}.}
\end{figure}

\begin{figure}[!htbp]
\includegraphics[scale=0.6]{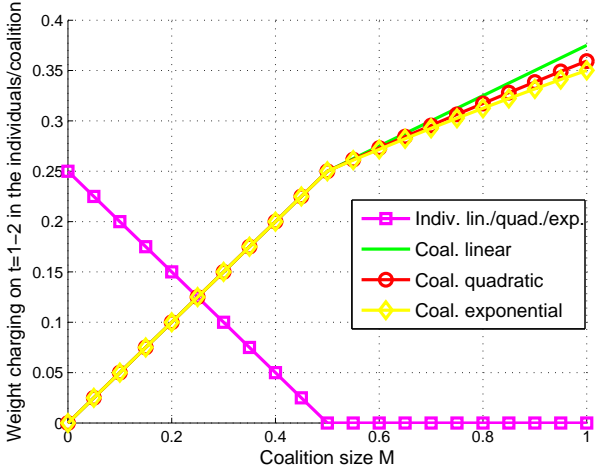}
\caption{\label{fig:x1bis} Configuration of CE according to the size of the coalition $M$ for $L_3 \leq L_1=1.5<L_3+1$.}
\end{figure}

\medskip

Next, the cost of CE is analyzed in the case of an exponential or quadratic charging cost function. Fig. \ref{fig:cost} presents the different normalized costs, i.e. the costs divided by the social cost at $M=0$. It shows the ranking between individual, coalitional and social costs and their monotonicity. It also quantifies the social benefit realized with respect to the size of the coalition. In the quadratic case, for example, a small gain of approximately $3\%$ is made with a coalition of size $M=1$ in comparison with the case with only individuals ($M=0$). However, in the situation of a global coalition ($M=1$), this figure also shows that any EV deviating to schedule alone its charging policy will do a significant benefit of $\frac{0.97-0.88}{0.97} \approx 9\%$. This highlights that the configuration with a coalition of size one is very efficient but also very unstable in the sense that each individual EV has a great interest to quit the coalition.

\begin{figure}[!htbp]
\includegraphics[scale=0.6]{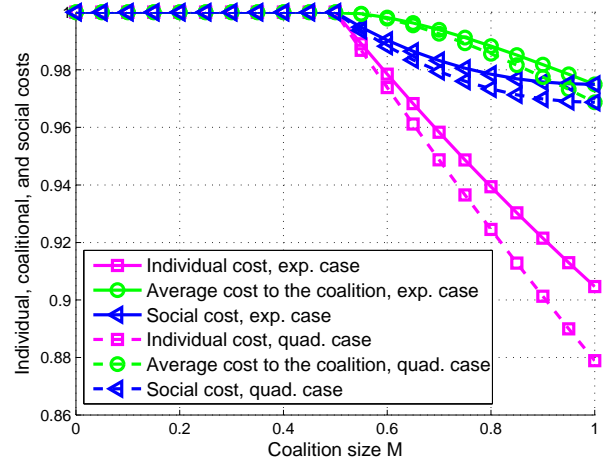}
\caption{\label{fig:cost} Individual, coalitional and social cost at CE for a quadratic or exponential charging cost function and $L_1=1.5<L_3+1$.}
\end{figure}

%%%%%%%%%%%%%%%%%%%%%%%%%%%%%%%%
\subsection{A first step towards larger dimensions}
%%%%%%%%%%%%%%%%%%%%%%%%%%%%%%%%

As one would expect, determining the CE of a game is rather complicated. Even for some given charging cost function $f$ and for small instances, it may be impossible to give the explicit form of the equilibrium configuration analytically. Consequently, it is of interest to see whether there are simple and distributed learning schemes that allow players to arrive at a reasonably stable solution. One of these schemes is based on an exponential learning behavior where players play the game repeatedly and learn the best strategies by keeping record of their strategies' performance (see \cite{Mertikopoulos09}). At each step denoted by index $n$, the individuals update their cumulative cost for strategy $t$, $V_t^{0,(n)}$, as

\begin{equation}
\label{eq:repDyn}
V_t^{0,(n)}=V_t^{0,(n-1)}+u_t(\bm{x}^{(n-1)}) \text{ ,}
\end{equation}
where $\bm{x}^{(n-1)}=(\bm{x}^{0,(n-1)},\bm{x}^{1,(n-1)})$ is the strategy profile of all the players at the $(n-1)$th iteration of the dynamics. These cumulative costs reinforce the perceived success of each strategy as measured by the average payoff it yields. Hence, the players will lean towards the strategy with the smallest cumulative cost. The precise way in which this is done is by playing according to the exponential law:

\begin{equation}
\label{eq:dynRep2}
x_t^{0,(n)}=\frac{e^{-V_t^{0,(n)}}}{\sum_{s=1}^{T-C+1} e^{-V_s^{0,(n)}}} \text{ .}
\end{equation}

Similarly, the coalition updates its cumulative cost $V_t^{1,(n)}$ for strategy $t$ replacing $u_t(\bm{x}^{(n-1)})$ by $\nabla_{\bm{x}^{1}}\Pi^{1}(\bm{x}^{(n-1)})$ in (\ref{eq:repDyn}) and its weights according to (\ref{eq:dynRep2}) with $V_t^{1,(n)}$ instead of $V_t^{0,(n)}$.

%\begin{equation}
%V_t^{1,(n)}=V_t^{1,(n-1)}+\nabla_{\bm{x}^{1}}\Pi^{1}(\bm{x}^{1,(n-1)},\bm{x}^{0,(n-1)}) \text{ ,}
%\end{equation}

%\begin{equation}
%x_t^{1,(n)}=\frac{e^{-V_t^{1,(n)}}}{\sum_{s=1}^{T-C+1} e^{-V_s^{1,(n)}}} \text{ .}
%\end{equation}

When players update their cumulative costs in continuous time, we obtain the standard replicator dynamics \cite{Mertikopoulos09}. Interestingly, this dynamics has been shown to converge\footnote{Furthermore, if the convergence point is an interior point, it is a composite equilibrium.} for composite games in the case of linear cost functions \cite{Cominetti09}.

First, this dynamics has been tested on the simple cases analyzed in Sec. \ref{subsec:simuT3C2} and we find the same results as the ones obtained with the analytical formula, not only for the linear case for which it is theoretically proven but also for the quadratic and exponential cases. This is of primary interest for practical applications and also leads to the open problem of the convergence in the quadratic and exponential cases.

Then, we propose a first realistic application studying the EV charging during the night time in a district with $T=7$ considering a two hours time step; $t=1$ corresponds to $5pm-7pm$, $t=2$ to $7pm-9pm$, ..., $t=7$ to $5am-7am$ the next day. The sequence of non-EV loads is a normalized version of the global consumption profile in France for the aforementioned period of time; the data are available on the RTE website "http://clients.rte-france.com/lang/fr/visiteurs/vie/courbes.jsp". The other parameters are set to $C=3$ and $P=0.2$ which corresponds to the case where EVs need to charge until $75$\% of their battery capacities\footnote{A full charging of a battery of $24$kWh at $3$kW needs $8$hours.} and a penetration rate of approximately $40$\%\footnote{Given that the maximal household electricity consumption is typically of $6$kW and the EV charging rate at home of $3$kW, the ratio $\frac{0.2/3}{1/6}=0.4$ approximates the EV penetration rate in the district under the assumption that all houses consumes at their full contracted power at peak time.}. Finally, the linear charging cost function $f(L)=L$, for which convergence of the replicator dynamics holds, is considered. We first observe the total load during the considered night, as the sum of the non-EV load, the individual nonatomic charging load and the coalition charging load. It can be observed that while individuals charge mainly during the night when the non-EV load is small, coalition's load is more uniformly distributed.

\begin{figure}[!htbp]
\includegraphics[scale=0.6]{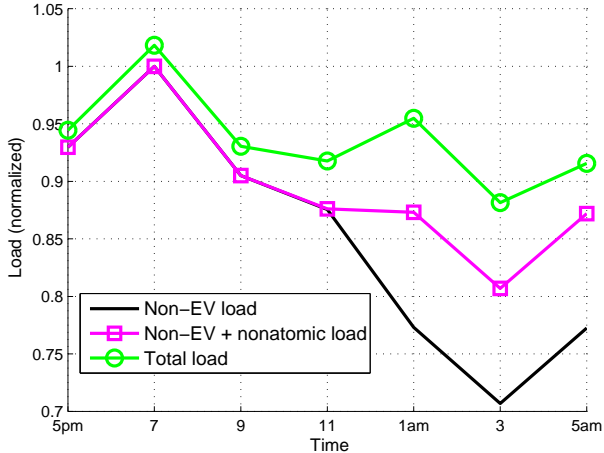}
\caption{\label{fig:totalLoadT7C3} Total load at a CE for EV night charging ($T=7>3$) with a linear charging cost function $f(L)=L$: \textit{individuals' load is put on the "valley" of the non-EV load, coalition's load is more uniformly distributed}.}
\end{figure}

The two following figures are dedicated to study if the main theoretical properties established in the particular case of three time-slots are still observed when considering this larger case.

\begin{figure}[!htbp]
\includegraphics[scale=0.6]{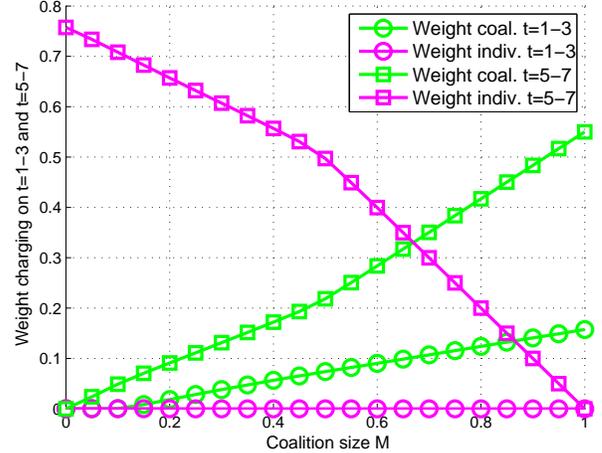}
\caption{\label{fig:configCET7C3} Configuration of CE for EV night charging ($T=7>3$) with a linear charging cost function $f(L)=L$ according to the size of the coalition $M$: \textit{the monotonicity properties theoretically established for $T=3$ and $C=2$ still seem to hold}.}
\end{figure}

Fig. \ref{fig:configCET7C3} (respectively Fig. \ref{fig:costT7C3}) shows that the monotonicity properties of the charging weights (respectively costs) at CE proven in the case $T=3$, $C=2$ still seem to hold: a next step of this work will be to confirm these observations with theoretical arguments. Finally, this again exhibits the phenomenon of "social dilemma".

\begin{figure}[!htbp]
\includegraphics[scale=0.6]{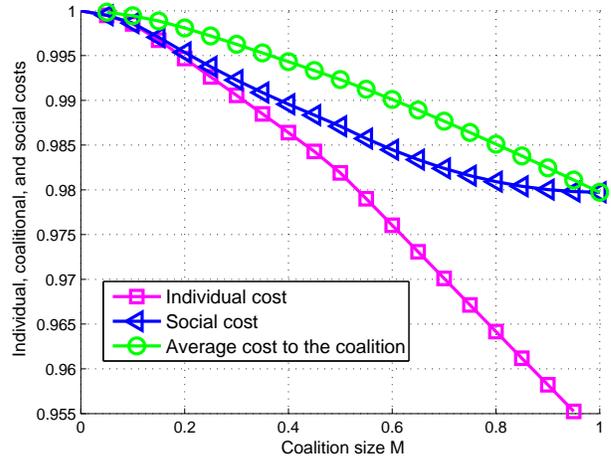}
\caption{\label{fig:costT7C3} Individual, coalitional and social costs at a CE for EV night charging ($T=7>3$) with a linear charging cost function $f(L)=L$ according to the size of the coalition $M$: \textit{the monotonicity properties theoretically established for $T=3$ and $C=2$ still seem to hold}.}
\end{figure}

%%%%%%%%%%%%%%%%%%%%%%%%%
\section{Conclusion}
\label{sec:ccl}
%%%%%%%%%%%%%%%%%%%%%%%%%

In this paper, we introduce the game-theoretical framework of composite games as a tool to analyze the situation where both autonomous EVs and coalitions of EVs, i.e., groups of EVs which are coordinated by a unique aggregator, coexist when taking their charging decisions.
In this context, the existence of a stable configuration, a composite equilibrium, is proven to exist. At equilibrium, the cost of individuals, the average cost in the coalition, and the social cost have been compared.

Then, more detailed properties of the composite equilibrium are established in the illustrative case of three time-slots as a first step to validate some intuitions. In particular, it is shown that the charging weights on the time-slot with the largest non-EV demand/load increases with the coalition size for the coalition, while it decreases for the individuals. This highlights the different behaviour of the individuals and the coalition, expressing in particular that larger coalitions integrate more externalities. Furthermore, all the costs are proven to decrease with the size of the coalition supporting the idea of forming big coalitions for EV charging but leading also to a standard ``social dilemma'': The social optimum is obtained for a coalition of maximal size but then each EV prefers to act individually. A relevant extension of this work would be to design incentives to make the configuration with a coalition of maximal size stable.

Finally, simulations both quantify these phenomena in the simple case of three time-slots and are also conducted in the realistic case of EV night charging with a larger number of time-slots. Interestingly, the theoretical results proven in the case of three time slots seem to hold in the simulation realized in this latter case: this shows that there is still room for improving the understanding of the properties of this problem in a general setting. The framework of composite games seems to be particularly promising for understanding heterogenous distributed networks such as smart grids.

% Can use something like this to put references on a page
% by themselves when using endfloat and the captionsoff option.
\ifCLASSOPTIONcaptionsoff
  \newpage
\fi

\bibliographystyle{IEEEtran_NoURL}
\bibliography{BiblioCompositeSG}
%\nocite{*}

\end{document}